\newtheorem{mylemma}{Lemma}[section]
\newtheorem{mytheorem}[mylemma]{Theorem}
\theoremstyle{definition}
\newtheorem{myexample}[mylemma]{Example}
\newtheorem{mydefinition}[mylemma]{Definition}
\newcommand{\txt}[1]{\texttt{#1}}
\newcommand{\comb}[2]{{{#1} \choose {#2}}}
\newcommand{\scr}[1]{\mathcal{#1}}
\begin{document}

\title[Counting inequivalent monotone Boolean functions]{Counting inequivalent monotone Boolean functions}
\author{Tamon Stephen}
\author{Timothy Yusun}
\address{Department of Mathematics \\
Simon Fraser University \\
8888 University Drive \\
Burnaby, B.C. V5A 1S6\\
Canada
}
\email{tyusun@sfu.ca}
\email{tamon@sfu.ca}
\subjclass[2010]{Primary 68W05; Secondary 06E30, 05A05}
\keywords{Boolean functions, Dedekind numbers}

\begin{abstract}
Monotone Boolean functions (MBFs) are Boolean functions $f: \{0,1\}^n \rightarrow \{0,1\}$ satisfying the monotonicity condition $x \leq y \Rightarrow f(x) \leq f(y)$ for any $x,y \in \{0,1\}^n$. The number of MBFs in $n$ variables is known as the $n$th Dedekind number. It is a longstanding computational challenge to determine these numbers exactly -- these values are only known for $n$ at most 8. Two monotone Boolean functions are inequivalent if one can be obtained from the other by renaming the variables. The number of inequivalent MBFs in $n$ variables was known only for up to $n = 6$. In this paper we propose a strategy to count inequivalent MBF's by breaking the calculation into parts based on the profiles of these functions. As a result we are able to compute the number of inequivalent MBFs in 7 variables. The number obtained is 490013148.
\end{abstract}

\maketitle

\begin{section}{Introduction}

A \emph{Boolean function on $n$ variables} (BF) is a function $f: \{0,1\}^n \rightarrow \{ 0,1 \}$. A \emph{monotone Boolean function} (MBF) additionally satisfies the condition $x \leq y \Rightarrow f(x) \leq f(y)$, for any $ x, y \in \{0,1\}^n$. We write $x \leq y$ if $x_i \leq y_i$ for all $i = 1, 2, \ldots, n$, and $x < y$ if $x \leq y$ and $x_i < y_i$ for some $i$. A BF is monotone if and only if it can be written as a combination of conjunctions and disjunctions only.

Since each input state in $\{0,1\}^n$ has two possible output states, there are a total of $2^{2^n}$ Boolean functions on $n$ variables. On the other hand, no exact closed form is known for the number of \emph{monotone} Boolean functions on $n$ variables. This number is usually denoted by $D(n)$, which is also called the $n$th \emph{Dedekind number}. These numbers are named after Richard Dedekind who defined them in \cite{gessamelte}. The first few values are given in Table~\ref{tableDN}, taken from \cite{oeis}. Currently, only values of $D(n)$ up to $n = 8$ are known. 

\begin{table}[!htbp]
\begin{center} 
	\begin{tabular}{| c | r | c |}
	\hline
	$n$ & $D(n)$ & Source\\
	\hline
	0 & 2 & \multirow{5}{*}{Dedekind, 1897} \\
	1 & 3 & \\
	2 & 6 & \\
	3 & 20 & \\
	4 & 168 & \\
	\hline
	5 & 7 581 & Church, 1940 \cite{churchD5} \\
	\hline
	6 & 7 828 354 & Ward, 1946 \cite{wardD6} \\
	\hline
	\multirow{2}{*}{7} & \multirow{2}{*}{2 414 682 040 998} & Church, 1965 \cite{churchD7} \\
	& & {\small (see also \cite{bermanD7})}  \\
	\hline
	\multirow{2}{*}{8} & \multirow{2}{*}{56 130 437 228 687 557 907 788} & Wiedemann, 1991 \cite{wiedemann} \\
	& &  {\small (see also \cite{fidytek})} \\
	\hline
\end{tabular}
\end{center} 
\caption{Known Values of $D(n)$, \txt{A000372}}
\label{tableDN}
\end{table}

Kisielewicz gives in \cite{kisielewicz} a logical summation formula for $D(n)$, however performing the computation using his summation has the same complexity as brute force enumeration of $D(n)$. (See \cite{korshunovsurvey}, e.g.) There are some asymptotic results concerning the behavior of $D(n)$, one of the earliest of which was a result of Kleitman in 1969, that $\log_2D(n) \sim {n \choose {\lfloor n/2 \rfloor}}$ \cite{kleitmanasymp}. So far, the most accurate one is given by Korshunov in \cite{korshunovsurvey}, given in Table~\ref{asymkorshunov}.

\begin{table}[!htbp]
\begin{center}
\begin{tabular}{| c |}
\hline
\\
$D(n) \sim 2^{n \choose {n/2}} \cdot \exp\left[ {n\choose{\frac{n}{2} - 1} }\left( 2^{-n/2} + n^2 2^{-n-5} - n 2^{-n-4}\right)  \right], \text{ for even $n$} $ \\
\\
\hline
\\
$D(n) \sim 2^{{n \choose {(n-1)/2}} + 1} \cdot \exp\left[ {n\choose{\frac{n-3}{2}} }\left( 2^{(-n-3)/2} - n^2 2^{-n-5} - n 2^{-n-3}\right)\right.$ \\
$\left.+ {n\choose{\frac{n-1}{2}} }\left( 2^{(-n-1)/2} - n^2 2^{-n-4}\right)  \right], \text{ for odd $n$}$ \\
\\
\hline
\end{tabular}
\end{center} 
\caption{Korshunov's Asymptotics} \label{asymkorshunov}
\end{table}

\begin{subsection}{Inequivalent MBFs}

We define an MBF $f$ to be \emph{equivalent} to another MBF $g$ if $g$ can be obtained from $f$ by a renaming of the variables. For example, the function $f(x_1,x_2,x_3) = (x_1 \wedge x_2) \vee (x_2 \wedge x_3)$ is equivalent to $g(x_1,x_2,x_3) = (x_1 \wedge x_2) \vee (x_1 \wedge x_3)$, since interchanging $x_1$ and $x_2$ sends $f$ to $g$. We write this as $f \sim g$. For brevity, from here on we write $x_i \wedge x_j$ as $x_ix_j$.

Let $R(n)$ be the number of equivalence classes defined by ``$\sim$'' among monotone Boolean functions on $n$ variables. As with $D(n)$, no closed form is known for $R(n)$, and in fact only the values up to $n=6$ have been computed. These appear to have been obtained by the straightforward method of listing all monotone Boolean functions on $n$ variables and then sorting them into equivalence classes. The known values from~\cite{oeis} are shown in Table~\ref{table:Rn}. 

\begin{myexample}
The five functions in $R(2)$ are: $f = 0$, $f=1$, $f = x_1$, $f = x_1 \vee x_2$, and $f = x_1x_2$. The functions in $D(2)$ are exactly these functions plus $f = x_2$, which is equivalent to $f = x_1$.
\end{myexample}

\begin{table}[!htbp]
\begin{center}
	\begin{tabular}{| c | r |}
	\hline
	$n$ & $R(n)$\\
	\hline
	0 & 2 \\
	1 & 3 \\
	2 & 5 \\
	3 & 10 \\
	4 & 30 \\
	5 & 210 \\
	6 & 16 353\\
	\hline
\end{tabular}
\end{center}
\caption{Known Values of $R(n)$, \txt{A003182}}\label{table:Rn}
\end{table}

\end{subsection}

\begin{subsection}{Terminology and Elementary Facts}\label{sec:term}

Define a \emph{minimal term} of an MBF $f$ to be an input $x \in \{0,1\}^n$ such that $f(x) = 1$ and $f(y) = 0$ if $y < x$. The minimal terms of a monotone Boolean function represent the ``smallest'' sets where the function equals one -- any input below $x$ evaluates to zero, and everything above evaluates to one by virtue of monotonicity. For example, the function $f(x_1,x_2,x_3) = x_1 \vee x_2x_3$ evaluates to one at $(1,0,0)$ and $(0,1,1)$, as well as at all vectors above $(1,0,0)$ and $(0,1,1)$, and evaluates to $0$ at all other vectors. Indeed, each MBF can be written as a disjunction of clauses, each representing one of its minimal terms.

MBFs can be classified according to the number of minimal terms they have. Call $D_k(n)$ the number of monotone Boolean functions on $n$ variables with $k$ minimal terms. Kilibarda and Jovovic~\cite{kj} derive closed form expressions for $D_k(n)$ for fixed $k = 4,5,\ldots,10$, these sequences are sequences \txt{A051112} to \txt{A051118} of \cite{oeis}.

A \emph{truth table} for a Boolean function is a row of zeros and ones which encodes the outputs of the function corresponding to every possible input state. To illustrate, the function on three variables $f(x_1,x_2,x_3) = x_1\vee x_2x_3$ has minimal terms $\left\{ \{1\},\{2,3\} \right\}$, and it has the following truth table:

\begin{table}[!htbp]
\begin{center}
	\begin{tabular}{| l | c | c | c | c | c | c | c | c |}
	\hline
	variables set to $1$ & $x_1,x_2,x_3$ & $x_2,x_3$ & $x_1,x_3$ & $x_3$ & $x_1,x_2$ & $x_2$ & $x_1$ & none \\
	\hline
	output states & $1$ & $1$ & $1$ & $0$ & $1$ & $0$ & $1$ & $0$ \\
	\hline
\end{tabular}
\end{center}
\caption{Truth table for the function $f(x_1,x_2,x_3) = x_1 \vee x_2x_3$}
\end{table}

Note that the input states on the top row are arranged in a reverse colexicographic (or \emph{colex}) order on $\{0,1\}^3$, defined as $x < y$ if $x \ne y$ and $x_k < y_k$ where $k = \max\{i : x_i \ne y_i\}$. Fixing this order, we write $f$ as \txt{11101010}. In general, any Boolean function on $n$ variables can be written as a $0$-$1$ string of length $2^n$ where each entry corresponds to an input state; we use this convention throughout this paper. The choice for the ordering has the nice property that the first $2^{n-1}$ of its entries all involve setting the variable $x_n$ to $1$, and the second half has inputs with $x_n = 0$.

The truth table form is the most compact way to represent general Boolean functions. For monotone Boolean functions, both the truth table and the minimal terms representation are useful for our purposes.

\begin{myexample}
The colexicographic order for two variables is $\{1,2\} > \{2\} > \{1\} > \{ \}$. The functions in $D(2)$, written in truth table form are $\{\txt{1111}$, $\txt{1110}$, $\txt{1100}$, $\txt{1010}$, $\txt{1000}$, $\txt{0000}\}$. 
\end{myexample}

\end{subsection}

\begin{subsection}{Motivation}
Monotone Boolean functions are interesting because many mathematical objects can be represented by MBFs. For instance, there is a one-to-one correspondence between MBFs in $D(n)$ and antichains in the set $2^{[n]}$, that is, pairwise incomparable subsets of the power set of $\{1,2,\ldots,n\}$. Specifically, the set of minimal terms of an MBF in $D(n)$ is an antichain in $2^{[n]}$. Since by Sperner's Theorem, any antichain on the $n$-set can have at most $n \choose \lfloor n/2 \rfloor$ elements, we have that any $n$-variable MBF can have at most $n \choose \lfloor n/2 \rfloor$ minimal terms.

The one-to-one correspondence between $n$-variable MBFs and Sperner hypergraphs is also well-known. In particular, each minimal term of an MBF maps to a hyperedge in the corresponding hypergraph, and because of pairwise incomparability, the hypergraph thus exhibits the Sperner property, that is, no hyperedge contains another.

Some other fields in which monotone Boolean functions appear include lattice theory \cite{ilya}, nonlinear signal processing \cite{ilya}, coding theory \cite{ito}, computational learning theory \cite{ilyawww}, game theory \cite{riquelme}, and computational biology (\cite{klamt},\cite{haus}).

For a comprehensive discussion of Boolean functions, see the recently-published book by Crama and Hammer \cite{cramahammer}.

\end{subsection}

\end{section}

\begin{section}{Computational Strategies}\label{sec:compstrat}

\begin{subsection}{Profiles of MBFs}

It is natural to refine the classification of monotone Boolean functions by number of minimal terms, and consider how many elements are contained in each of these terms. We define the notion of a profile formally as given in \cite{sperner}, and introduce some notation:

\begin{mydefinition}[Profile of an MBF]\label{def:prof}
Given an $n$-variable MBF $f$ where $f \not\equiv 1$, the \bf profile \rm of $f$ is a vector of length $n$ $(a_1,a_2,\ldots,a_n)$, where the $i$th entry is equal to the number of minimal terms of $f$ which are $i$-sets.
\end{mydefinition}

\begin{myexample}
The MBF \texttt{11111100} has minimal terms $\{2\},\{3\}$, and profile $(2,0,0)$, while the MBF \texttt{11111000} has minimal terms $\{1,2\}$, $\{3\}$, and profile $(1,1,0)$.
\end{myexample}

\begin{mydefinition}
Given profile vector $(a_1, a_2, \ldots, a_n)$, define $(a_1, a_2, \ldots, a_n)_D$ to be the number of monotone Boolean functions on $n$ variables with profile vector $(a_1, a_2, \ldots, a_n)$. Similarly define $(a_1, a_2, \ldots, a_n)_R$ for inequivalent monotone Boolean functions on $n$ variables.
\end{mydefinition}

Note that the number of variables $n$ is implicit in the profile vector -- it is just the length of the vector. Some relations between the profiles are described and proven in Lemma~\ref{proflemma}.

\begin{mylemma}
Assume that all profile vectors pertain to MBFs on $n$ variables, unless otherwise stated.
\begin{description}\label{proflemma}
\item[(A)] $(0,0,\ldots,a_i,\dots,0)_D = (0,0,\ldots,{n \choose i} - a_i,\ldots,0)_D$.
\item[(B)] If $a_1 > 0$, then $a_{n} = 0$ and \\ $(a_1, a_2, \ldots, a_{n-1},a_n)_D = (a_1 - 1, a_2, \ldots, a_{n-1})_{D_{n-1}}$.
\item[(C)] $(a_1, a_2, \ldots,a_{n-2}, a_{n-1},a_n)_D = (a_{n-1},a_{n-2},\ldots,a_2,a_1,a_n)_D$.
\end{description}
All these statements hold true when $D$ is replaced by $R$.
\end{mylemma}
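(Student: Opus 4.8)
The plan is to translate the entire statement into the language of antichains. Since every MBF other than the constant $1$ is determined by its set of minimal terms, which form an antichain of nonempty subsets of $[n]$, and since the profile merely records how many members of this antichain have each cardinality, I would prove all three identities by exhibiting explicit cardinality-respecting bijections on antichains and then checking that each bijection commutes with the relabelling action of the symmetric group $S_n$, so that it descends to equivalence classes and the $R$-versions follow for free. The two tools I would set up at the outset are complementation within a fixed layer, $\mathcal{A} \mapsto \binom{[n]}{i}\setminus\mathcal{A}$, and elementwise complementation, $S \mapsto [n]\setminus S$; both are involutions, and the second reverses inclusion and hence sends antichains to antichains.

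For (A), the relevant functions have all minimal terms of a single size $i$, and any family of $i$-sets is automatically an antichain, so such functions correspond exactly to subsets of $\binom{[n]}{i}$ of size $a_i$. Complementation within the $i$th layer is an involution sending an $a_i$-family to a $(\binom{n}{i}-a_i)$-family, which is the claim (equivalently, this is the symmetry $\binom{\binom{n}{i}}{a_i}=\binom{\binom{n}{i}}{\binom{n}{i}-a_i}$). For (C), I would apply elementwise complementation to every minimal term: this turns each $i$-set into an $(n-i)$-set for $1\le i\le n-1$, thereby reversing the first $n-1$ entries of the profile, and it is a bijection on the set of antichains of nonempty proper subsets. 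The only bookkeeping is the degenerate layer: the full set $[n]$ can be a minimal term only in the single function $x_1\cdots x_n$ (profile $(0,\dots,0,1)$), so I would dispose of the case $a_n=1$ separately and note that otherwise $a_n=0$ is preserved. Both maps visibly commute with permutations of $[n]$, giving the $R$-statements.

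The heart of the lemma is (B). First I would argue $a_n=0$: if some singleton $\{j\}$ is a minimal term, then (for $n\ge 2$) the antichain condition forbids any strictly larger minimal term containing $\{j\}$, and in particular forbids $[n]$, so $a_n=0$. For the counting identity I would use the structural observation that a singleton minimal term $\{j\}$ meets no other minimal term: since $\{j\}$ is minimal, every other member $S$ of the antichain satisfies $j\notin S$. Consequently deleting $\{j\}$ leaves an antichain of nonempty subsets of $[n]\setminus\{j\}$ whose size-distribution is exactly $(a_1-1,a_2,\dots,a_{n-1})$, i.e.\ an MBF on $n-1$ variables, and the inverse operation adjoins a fresh singleton. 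This sets up the desired correspondence with the profile $(a_1-1,a_2,\dots,a_{n-1})$ on $n-1$ variables.

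The hard part will be making (B) well defined, which is exactly where the content lies. For the equivalence-class count I must check that the reduced $(n-1)$-variable class does not depend on which singleton I delete. I would prove this by showing that any two singleton minimal terms are interchangeable: if $\{i\}$ and $\{j\}$ both lie in the antichain $\mathcal{M}$, then---because no other member contains $i$ or $j$---the transposition $\tau=(i\,j)$ fixes every remaining minimal term and swaps $\{i\}\leftrightarrow\{j\}$, so $\tau(\mathcal{M})=\mathcal{M}$ and $\tau$ carries the $\{i\}$-deletion onto the $\{j\}$-deletion via a bijection $[n]\setminus\{i\}\to[n]\setminus\{j\}$. Together with the fact that adjoining $\{n\}$ is inverse to deletion up to renaming, this yields the $R$-identity; for the labelled count one instead restricts to functions in which a fixed singleton, say $\{n\}$, is a designated minimal term, so the main point to pin down is the precise reading of the $(n-1)$-variable quantity on the right-hand side. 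Throughout, the edge cases $n=1$ and $a_n=1$ should be checked by hand.
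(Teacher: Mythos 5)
Your proposal takes the same route as the paper's proof: complementation within the $i$th layer for (A), deletion of a singleton minimal term for (B), and elementwise complementation of the antichain of minimal terms for (C) (with the $a_n=1$ case set aside), the $R$-versions following because each map commutes with the $S_n$ relabelling action. The one place you go beyond the paper is (B), and your caution there is warranted: as you observe, singleton deletion is a clean bijection only after a \emph{fixed} singleton, say $\{n\}$, is designated, and this means the labelled identity in (B) cannot hold as literally written --- for example $(1,0,\ldots,0)_D = n$, one function for each choice of singleton, while $(0,\ldots,0)_{D_{n-1}} = 1$; in general the two sides of (B) differ by the number of ways to choose which variables carry the singletons. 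The paper's own proof elides this by silently taking the singleton to be $\{n\}$ and treating deletion as a bijection on all functions with the given profile. The equivalence-class version is the one that survives, and it is the only one the paper actually uses (to refer profiles with $a_1>0$ back to $R(6)$); your interchangeability argument via the transposition $(i\,j)$, which fixes every other minimal term, is exactly what is needed to make the class-level map well defined and injective there. Apart from this (correct) refinement, your argument matches the paper's.
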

\begin{proof}
The proof of each claim rests on the fact that there is a one-to-one correspondence between functions of the first type and functions of the second type, for the purposes of counting both $D(n)$ and $R(n)$.

\textbf{(A)}: Given an MBF with exactly $a_i$ $i$-sets as minimal terms, we can derive another MBF with minimal terms exactly the $\comb{n}{i} - a_i$ $i$-sets which were not taken in the first MBF. Furthermore, the images of any two equivalent functions under this correspondence will also be equivalent, under the same permutation.

\textbf{(B)}: If an $n$-variable MBF has a singleton set, say $\{n\}$, as a minimal term, then we know that the rest of its minimal terms cannot contain the element $n$. Hence $a_n = 0$. In addition, removing the term $\{n\}$, we are left with an $(n-1)$-variable MBF, with the profile $(a_1-1,a_2,\ldots,a_{n-1})$.

\textbf{(C)}: If $a_n = 1$, then $\{1,2,\ldots,n\}$ is the only minimal term. This implies that all the other $a_i$'s are zero, and hence the claim follows trivially.

If $a_n = 0$, assume that the minimal terms of an MBF $f$ with the given profile are $A_1,A_2,\ldots,A_k$. We know that none of the $A_i$'s are comparable, so it should follow that none of the sets $[n] - A_1, [n] - A_2, \ldots, [n] - A_k$ must be comparable as well. Hence the collection $\{[n] - A_j\}_{1\leq j \leq k}$ is the set of minimal terms of an MBF $g$ where the number of $i$-sets is equal to the number of $(n-i)$-sets in $f$. This proves that the profile of $g$ is $(a_{n-1},a_{n-2},\ldots,a_2,a_1,a_n)$.
\end{proof}

Lemma \ref{proflemma} is very useful in reducing the amount of computation that needs to be done to compute $D(n)$ or $R(n)$. For instance, when counting $R(7)$, instead of counting all $(0,0,k,0,0,0,0)$ for $k=1$ to $35$, we count the profiles up to $k = 17$. Part \textbf{(B)} enables us to refer back to $R(6)$ when considering profiles with a nonzero entry in the first position. The most useful is \textbf{(C)}, which effectively cuts all computation time in half.

\begin{subsubsection}{Generating Profiles}

Sequence \txt{A007695} on the OEIS gives the number of profile vectors for any $n$, and also outlines an algorithm that can be used to compute this number \cite{oeis}. We modify this algorithm to actually output the profiles that are being counted. We present this algorithm as Algorithm \ref{alg:prof}.

\SetAlFnt{\small}
\begin{algorithm} \label{alg:prof}
	\SetAlgoLined
	\KwIn{$n$}
	\KwOut{$P(n)$, the list of profiles of MBFs on $n$ variables}
	initialize $C := \txt{zeros}\left(n+1,{n \choose {\lfloor n/2 \rfloor}} + 1\right)$ \;
	$K := C$, $C(0,0) = 1$, $C(0,1) = 1$, $s = 2$ \;
	initialize $P(n) := \txt{zeros}\left(n+1,n\right)$ \;
	set the first column of $P(n)$ to the vector $[0 \ 1 \ 2 \ \cdots \ n]^T$ \;
	\txt{total}$\ := n + 1$ \;
	\For{$r = 1$ \KwTo $n$}{
		$d \leftarrow s$, $k \leftarrow r$, $j \leftarrow 0$, $s \leftarrow 0$ \;
		$x_{\max} = {n \choose r}$ \;
		\For{$x = 0$ \KwTo $x_{\max}$}{
			\ShowLn\If{$x \geq {k \choose r}$}{
				$k \leftarrow k + 1$ \;
			}
			\eIf{$x = 0$}{
				$K(r,x)=0$\;
				}{
				\ShowLn $K(r,x) = K(r-1,x-{{k-1} \choose {r}}) + {{k-1} \choose {r-1}}$ \;
			}
			\ShowLn\While{$j < K(r,x)$}{
				$d \leftarrow d - C(r-1,j)$ \;
				$j \leftarrow j + 1$ \;
			}
			$C(r,x)=d$\;
			$s \leftarrow s + d$\;
		}
		\If{$r \ne 1$}{
			\ShowLn \txt{recent} $=$ last $C(r,0) - C(r-1,0)$ rows of $P(n)$ \;
			\For{$x=1$ \KwTo $x_{\max}$}{
				\ShowLn\txt{candidates} $=$ rows of $\txt{recent}$ with $(r-1)$-st entry at least $K(r,x)$. \;
				Subtract $K(r,x)$ from the $(r-1)$-st column of \txt{candidates} \;
				Add $x$ to the $r$-th column of \txt{candidates} \;
				Append \txt{candidates} to $P(n)$ \;
				Update \txt{total} $\leftarrow$ \txt{total} $+$ size$(\txt{candidates})$. \;
			}
		}
	}
	Output $P(n)$. \;
\caption{Generating all profiles of MBFs on $n$ variables.}
\end{algorithm}

Algorithm \ref{alg:prof} uses a dynamic programming strategy, where the matrix $K(r,x)$ is built up from the previous values $K(r-1,x)$. In fact, the $(r,x)$-th entry of the matrix $K$ is a strict lower bound on the number of $(r-1)$-sets that any family of $x$ $r$-sets can contain. We use this information to generate the list of profiles $P(n)$. We prove these facts in Section~\ref{se:proof}.

Using the algorithm, we can compute the number of profiles of MBFs on $n$ variables, which is one less than Sequence A007695 on the OEIS, to account for the all-ones function. We show this in Table~\ref{table:numprof}.

\begin{table}[!htbp]
\begin{center}
	\begin{tabular}{| c | c | c | c |}
	\hline
	$n$ & Number of profiles & $n$ & Number of profiles\\
	\hline
	0 & 1 & 5 & 95 \\
	1 & 2 & 6 & 552\\
	2 & 4 & 7 & 5460 \\
	3 & 9 & 8 & 100708 \\
	4 & 25 & 9 & 3718353 \\
	\hline
\end{tabular}
\end{center}
\caption{Number of profiles for each $n$, from $n=0$ to $n=9$.}
\label{table:numprof}
\end{table}

\end{subsubsection}

\begin{subsubsection}{Using Profiles to Generate Functions}

A monotone Boolean function can be written uniquely as the disjunction of its minimal terms.  Thus we can generate all MBFs inductively beginning with profiles that have a single non-zero entry. These have a simple structure - each consists of $k$ subsets of size $i$, where $i$ is the index of the non-zero entry and $k$ is the value of that entry.  Suppose now there is a second non-zero entry with index $j$.  If that entry is 1, we can generate all functions for this profile by taking disjunctions between the functions in the first profile and all $j$-sets which are incomparable to each of them. If the $j$th entry is larger than 1, we just repeat the same steps until we generate the desired list of functions. Hence a typical computation might start with the profile $(0,7,0,0,0,0,0)$ and continue on to $(0,7,1,0,0,0,0)$, $(0,7,2,0,0,0,0)$, etc.
These lists are then in turn used as starting points for profiles with three non-zero terms, and so on.

Since we are computing inequivalent MBFs, we eliminate equivalent functions after each profile is generated. To do this we generate for each function obtained all 5040 equivalent functions, and take the ``least representative,'' by which we mean the lexicographically smallest function. This can be implemented quickly by keeping only the least representative as each permutation of the original MBF is generated.

With the goal of enumerating $R(7)$, we use Lemma \ref{proflemma} to eliminate profiles we do not have to compute. However, we have to compute some profiles more than once, as some profiles whose $R$-value can be obtained by symmetry are used as intermediate steps to compute other profiles. More details of the computation can be found in Section~\ref{sec:compute}.

The largest profile for $R(7)$ computed is that of $(0,0,7,7,0,0,0)$, having 5443511 functions. For further illustration, we give in Table~\ref{R5} the list of profiles in $R(5)$ and the number of functions in each one. Note the list does not include the all-ones function, which does not have a corresponding profile (by Definition~\ref{def:prof}).

\begin{table}[!htbp]
\begin{center}
	\begin{tabular}{| c | c | c | c | c | c | c | c |} 
	\hline
	Profile & \# & Profile & \# & Profile & \# & Profile & \#\\
	\hline
(0,0,0,0,0) & 1 & (0,9,0,0,0) & 1 & (1,0,3,0,0) & 1 & (0,2,0,1,0) & 1 \\
(1,0,0,0,0) & 1 & (0,10,0,0,0) & 1 & (0,1,3,0,0) & 6 & (0,3,0,1,0) & 1 \\
(2,0,0,0,0) & 1 & (0,0,1,0,0) & 1 & (0,2,3,0,0) & 6 & (0,4,0,1,0) & 1 \\
(3,0,0,0,0) & 1 & (1,0,1,0,0) & 1 & (0,3,3,0,0) & 4 & (0,0,1,1,0) & 1 \\
(4,0,0,0,0) & 1 & (2,0,1,0,0) & 1 & (0,4,3,0,0) & 1 & (0,1,1,1,0) & 1 \\
(5,0,0,0,0) & 1 & (0,1,1,0,0) & 2 & (0,0,4,0,0) & 6 & (0,2,1,1,0) & 1 \\
(0,1,0,0,0) & 1 & (1,1,1,0,0) & 1 & (1,0,4,0,0) & 1 & (0,0,2,1,0) & 2 \\
(1,1,0,0,0) & 1 & (0,2,1,0,0) & 4 & (0,1,4,0,0) & 6 & (0,1,2,1,0) & 1 \\
(2,1,0,0,0) & 1 &(1,2,1,0,0) & 1 & (0,2,4,0,0) & 4 & (0,0,3,1,0) & 3 \\
(3,1,0,0,0) & 1 & (0,3,1,0,0) & 6 & (0,3,4,0,0) & 1 & (0,1,3,1,0) & 1 \\
(0,2,0,0,0) & 2 & (1,3,1,0,0) & 1 & (0,4,4,0,0) & 1 & (0,0,4,1,0) & 2 \\
(1,2,0,0,0) & 2 & (0,4,1,0,0) & 6 & (0,0,5,0,0) & 6 & (0,0,5,1,0) & 1 \\
(2,2,0,0,0) & 1 & (0,5,1,0,0) & 4 & (0,1,5,0,0) & 4 & (0,0,6,1,0) & 1 \\
(0,3,0,0,0) & 4 & (0,6,1,0,0) & 2 & (0,2,5,0,0) & 1 & (0,0,0,2,0) & 1 \\
(1,3,0,0,0) & 3 & (0,7,1,0,0) & 1 & (0,0,6,0,0) & 6 & (0,1,0,2,0) & 1 \\
(2,3,0,0,0) & 1 & (0,0,2,0,0) & 2 & (0,1,6,0,0) & 2 & (0,0,1,2,0) & 1 \\
(0,4,0,0,0) & 6 & (1,0,2,0,0) & 1 & (0,0,7,0,0) & 4 & (0,0,2,2,0) & 1 \\
(1,4,0,0,0) & 2 & (0,1,2,0,0) & 4 & (0,1,7,0,0) & 1 & (0,0,3,2,0) & 1 \\
(0,5,0,0,0) & 6 & (1,1,2,0,0) & 1 & (0,0,8,0,0) & 2 & (0,0,0,3,0) & 1 \\
(1,5,0,0,0) & 1 & (0,2,2,0,0) & 7 & (0,0,9,0,0) & 1 & (0,0,1,3,0) & 1 \\
(0,6,0,0,0) & 6 & (0,3,2,0,0) & 6 & (0,0,10,0,0) & 1 & (0,0,0,4,0) & 1 \\
(1,6,0,0,0) & 1 & (0,4,2,0,0) & 4 & (0,0,0,1,0) & 1 & (0,0,0,5,0) & 1 \\
(0,7,0,0,0) & 4 & (0,5,2,0,0) & 1 & (1,0,0,1,0) & 1 & (0,0,0,0,1) & 1 \\
\cline{7-8}
(0,8,0,0,0) & 2 & (0,0,3,0,0) & 4 & (0,1,0,1,0) & 1 & \textbf{TOTAL} & \textbf{209} \\
	\hline
	\end{tabular}
\end{center}
\caption{Number of inequivalent five-variable MBFs by profile.}
\label{R5}
\end{table}

As a byproduct of the calculations done for $R(7)$, we also extended the known values for the sequences $R_k(n)$ included in the OEIS. The corresponding sequences for $D_k(n)$ are \txt{A051112} to \txt{A051118} \cite{oeis}. The new values that we have computed are included in Table~\ref{table:extended}.

\begin{table}[!htbp] \footnotesize
\begin{center}
	\begin{tabular}{| c | r | r | r |} 
	\hline
	$k$ & $R_k(5)$ & $R_k(6)$ & $R_k(7)$ \\
	\hline
	2 & 13 & 22 & 34 \\
	3 & 30 & 84 & 202 \\
	4 & 49 & 287 & \textbf{1321} \\
	5 & 48 & 787 & \textbf{8626} \\
	6 & 34 & 1661 & \textbf{50961} \\
	7 & 18 & 2630 & \textbf{253104} \\
	8 & 7 & 3164 & \textbf{1025322} \\
	9 & 2 & 2890 & \textbf{3365328} \\
	10 & 2 & 2159 & \textbf{9005678} \\
	11 & 0 & 1327 & \textbf{19850932} \\
	\hline
	\end{tabular}
\end{center}
\caption[Partial list of values of $R_k(n)$]{Partial list of values $R_k(n)$, values in boldface were not known to us.}\label{table:extended}
\end{table}

\end{subsubsection}

\begin{subsubsection}{Computing Bounds on $R(n)$ and $D(n)$} \label{subsec:bounds}
Since each MBF can have at most $7! - 1 = 5039$ other functions equivalent to it, we know that $D(7)/7! \sim 479$ million is a lower bound for $R(7)$. In fact, we can increase the lower bound by looking for highly symmetric functions. For instance, the MBF with minimal term $\{1\}$ is equivalent to only six other MBFs, all with one singleton set as the only minimal term. Hence this equivalence class only has 7 functions, and increases the lower bound we have by $\frac{5040 - 7}{5040}$. If we do this for all equivalence classes of functions with at most two minimal terms, and some simple equivalence classes with three and four minimal terms, we are able to increase this lower bound by around 500. 

This raises the interesting question of what the functions in high-cardinality equivalence classes look like, that is, functions which have few or no symmetries. For $n \le 5$ functions with no symmetries are quite rare, however it appears that they already are the overwhelming majority
when $n=7$.  The number of inequivalent $n$-variable MBFs that have no symmetries, starting from $n = 1$, is the sequence 0, 1, 0, 0, 7, 7281.

\end{subsubsection}

\end{subsection}

\end{section}

\begin{section}{Implementation Details}\label{sec:compute}

All computations were done on MATLAB, a high-level scientific computing language \cite{MATLAB}. We used three computational clusters: the Optima cluster at SFU Surrey, the IRMACS computational cluster, and the bugaboo cluster of Westgrid under Compute Canada. We use MATLAB for building a prototype because it is easy to get started, and it is built for handling large vectors and matrices. It has many built-in functions that work well with the types of lists we are generating, and if desired, further work can be transported over to other programming languages.

In MATLAB, we represent MBFs as their truth table forms, $1 \times 2^n$ row vectors. This representation also lends itself well to using 32-bit integers instead of long 0-1 strings. Given an MBF of length $2^n$, we partition the zeros and ones into blocks of length 32, which we consider as a binary number $(a_0a_1a_2\ldots a_{31})_2$, and which we then convert into decimal, by computing $\sum_{k=0}^31 a_k2^k$. If $n$ is smaller than $5$, the truth table form has less than 32 entries, and we pad with zeros on the right. For $n \geq 5$, an $n$-variable MBF can be written as $2^{n-5}$ 32-bit integers.

\begin{myexample}
The six-variable MBF \small $$f = \txt{1111111011111110111111001000000011111010111010101111100000000000}$$ \rm has 64 entries, so it is divided into two blocks of 32:
\begin{align*}
\txt{11111110111111101111110010000000} & \rightarrow 20938623 \\
\txt{11111010111010101111100000000000} & \rightarrow 2053983
\end{align*}
hence the 32-bit integer representation of $f$ is $(20938623, 2053983)$. Its minimal terms are $\{1,2,4\}$, $\{3,4\}$, $\{1,5\}$, $\{2,3,5\}$, $\{1,2,3,6\}$, $\{2,4,6\}$, $\{2,5,6\}$, and $\{3,5,6\}$.
\end{myexample}

The algorithms we use involve building and frequently referencing a very long list of functions. To do this efficiently we use a hash table and a nonlinear hashing function on the 32-bit representations to perform checks and lookups quickly. In particular, we use a polynomial hash function, which acts on the four integers (say $b_1$, $b_2$, $b_3$, and $b_4$) by repeatedly adding a number $\alpha > 2$ modulo a prime $p$, and multiplying by the next component. This can be written as $b_1(\alpha + b_2(\alpha + b_3(\alpha + b_4)))$ where all operations are done modulo $p$. In our computations, using hash tables for list handling instead of binary search led to a speedup of a factor of 4 for lists of size 40000, and a factor of 8 for lists of size 1500000.

The list of functions for each profile is generated from the list for a profile which differs from it by one in a single coordinate. This allows many possibilities for traversing this lattice of profiles. Our general strategy is as follows: first, all profiles with a nonzero 1st or 6th entry, by Lemma~\ref{proflemma} can be obtained from the corresponding profile in $R(6)$. Next, we consider the remaining profiles that have a nonzero 2nd or 5th entry. It is faster to generate profile $(0,1,8,8,0,0,0)_R$ from $(0,1,8,7,0,0,0)_R$ rather than from $(0,0,8,8,0,0,0)_R$  because the 2-set in the profile $(0,1,8,7,0,0,0)$ substantially reduces the number of comparable terms we need to consider when taking  disjunctions: any 2-set is comparable to five 3-sets and ten 4-sets. Note that this also implies that the largest profiles we encounter are those solely containing 3-sets and 4-sets. In fact, the functions in these profiles account for 366689638 out of the total of 490013148 for $R(7)$, or $74.8\%$.

In the computation, we generated the list of functions for profiles with exactly one nonzero entry first, then proceeded through the list of profiles with the above considerations as a general guide. Note that the branches of computation are independent of each other and so multiple calculations can be made to run concurrently. Also many profiles were computed more than once, either as an intermediate step with the goal of computing a larger profile, or as a redundancy check. For example, the profile $(0,0,3,4,0,0,0)_R$ is computed both by a process generating profiles of the form $(0,0,3,x,0,0,0)_R$ and another one generating profiles of the form $(0,0,x,4,0,0,0)_R$. We saved lists of functions to disk for the larger profiles generated to have various points to start further computations or recover (as some jobs lasted several weeks and were susceptible to system shutdowns, etc.).

All results of computations are saved by the script files as text files, which include the numbers obtained and the computation time (Fig. \ref{fig:samplelog}). To keep track of the data, we save all the results in a database together with the computation times, with a running total (Fig. \ref{fig:sampledatabase}). At the end we obtain the number $R(7) =$ \textbf{490013148}.

\begin{figure}[htbp!]
\centering \includegraphics[width=0.9\textwidth]{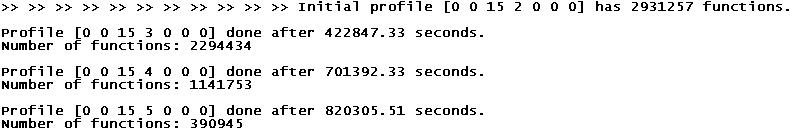}
\caption{Sample output log of the computation}\label{fig:samplelog}
\end{figure}

\begin{figure}[htbp!]
\centering \includegraphics[width=0.6\textwidth]{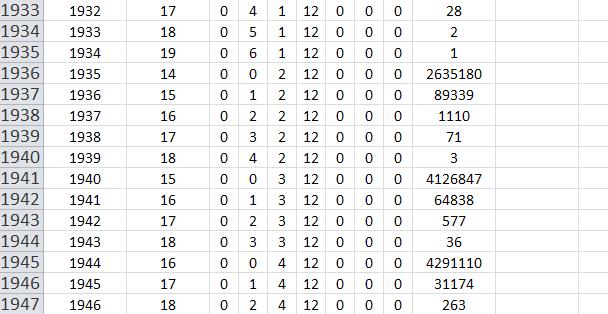}
\caption{Portion of database where results are stored}\label{fig:sampledatabase}
\end{figure}

To ensure accuracy we wrote code that outputs the minimal terms of any input function. We then checked the minimal terms of around 10,000 functions from each profile in a random sample. Moreover, each iteration of the code also includes a check to ensure that the functions in the input file indeed correspond to the profile we start with. We also tested our algorithm for $n = 6$ and obtained the correct number for both $D(6)$ and $R(6)$. As a final check, we observe that the number obtained is higher than the lower bound of 479 million discussed in Section~\ref{sec:compstrat}. Our MATLAB code and the database shown in Figure~\ref{fig:sampledatabase} are available on the website \cite{tjyusun}.

\end{section}

\begin{section}{Proof of Algorithm \ref{alg:prof}} \label{se:proof}

Here we present a proof that Algorithm \ref{alg:prof} is correct. First, we prove that the $r$th row of the matrix $K$ generated as an intermediate step in the algorithm contains lower bounds on the number of comparable $(r-1)$-sets. Then, we show that the output $P(n)$ of the algorithm contains all profiles of $n$-variable MBFs.

\begin{mylemma}\label{lemma:algK}
For $0 < r \leq n$ and $0 < x \leq {n \choose {\lfloor n/2 \rfloor}}$, the $(r,x)$-th entry of the matrix $K$ output by Algorithm~\ref{alg:prof} encodes the smallest number of $(r-1)$-sets that are comparable to any of $x$ number of $r$-sets.
\end{mylemma}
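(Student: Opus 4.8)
The plan is to recognize this statement as a reformulation of the Kruskal--Katona theorem and then verify that the algorithm merely implements the corresponding cascade (binomial) formula. Since an $(r-1)$-set and an $r$-set are comparable precisely when the former is contained in the latter, the $(r-1)$-sets comparable to a family $\scr{F}$ of $x$ distinct $r$-sets are exactly the members of its lower shadow $\partial\scr{F} = \{S : |S| = r-1,\ S \subset T \text{ for some } T \in \scr{F}\}$. Thus the quantity in the lemma is $\min\{|\partial\scr{F}| : |\scr{F}| = x\}$, which by Kruskal--Katona equals the cascade bound: writing $x = {a_r \choose r} + {a_{r-1} \choose r-1} + \cdots + {a_s \choose s}$ with $a_r > a_{r-1} > \cdots > a_s \geq s \geq 1$, the minimum is ${a_r \choose r-1} + {a_{r-1} \choose r-2} + \cdots + {a_s \choose s-1}$, attained by the colex-initial segment of $x$ $r$-sets. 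It therefore suffices to prove that $K(r,x)$ equals this cascade bound for every $0 < r \leq n$ and $0 < x \leq {n \choose r}$, which is the range the inner loop actually traverses; for $x > {n \choose r}$ there is no family of distinct $r$-subsets of $[n]$, so those entries are vacuous.

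First I would establish a loop invariant for the variable $k$: after the conditional increment at iteration $x \geq 1$ of row $r$, the value of $k$ satisfies ${k-1 \choose r} \leq x < {k \choose r}$, equivalently $k-1 = \max\{a : {a \choose r} \leq x\}$ is the leading cascade index $a_r$ of $x$. Since $k$ is reset to $r$ at the top of each row and $x$ increases by exactly one per iteration, this follows by induction on $x$: the thresholds ${r \choose r}, {r+1 \choose r}, \ldots$ are distinct integers, so a unit increase of $x$ crosses at most one of them, and the single increment per step is enough to re-establish the invariant. With $a_r := k-1$ fixed, the inequality $x < {a_r+1 \choose r} = {a_r \choose r} + {a_r \choose r-1}$ shows the remainder $x' := x - {a_r \choose r}$ appearing in the recurrence satisfies $0 \leq x' < {a_r \choose r-1}$, so its own leading index $a_{r-1} = \max\{a : {a \choose r-1} \leq x'\}$ obeys $a_{r-1} < a_r$. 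This is exactly the condition that keeps the cascade indices strictly decreasing as the recursion unwinds.

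With the invariant in hand, I would prove $K(r,x) = {a_r \choose r-1} + {a_{r-1} \choose r-2} + \cdots + {a_s \choose s-1}$ by induction on $r$. The base case is row $r=0$, where $K$ is identically zero, matching the empty shadow of a family of $0$-sets. For the inductive step the algorithm sets $K(r,x) = {k-1 \choose r-1} + K(r-1, x')$ with $x' = x - {a_r \choose r}$; by the preceding paragraph $x'$ has $(r-1)$-cascade ${a_{r-1} \choose r-1} + \cdots + {a_s \choose s}$ with $a_{r-1} < a_r$, so the inductive hypothesis gives $K(r-1,x') = {a_{r-1} \choose r-2} + \cdots + {a_s \choose s-1}$, and adding the leading term ${a_r \choose r-1}$ reproduces the full cascade bound. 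Combined with Kruskal--Katona, this identifies $K(r,x)$ with the minimum shadow size, i.e.\ the smallest number of $(r-1)$-sets comparable to any $x$ of the $r$-sets.

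The genuinely deep ingredient, the Kruskal--Katona theorem, is quoted rather than reproved; the real content here is the bookkeeping that links the iterative update of $k$ to the static cascade representation of $x$. I expect the main obstacle to be precisely this translation: verifying cleanly that the one-step increment maintains the invariant and that the remainder $x'$ always lands in the range forcing $a_{r-1} < a_r$, so that the recurrence assembles exactly the strictly-decreasing cascade formula rather than some shifted or overlapping variant.
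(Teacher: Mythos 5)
Your proposal is correct, and it takes a genuinely different route from the paper. The paper gives a self-contained (if somewhat informal) extremal argument: it exhibits the colex-initial family $\mathcal{U}\cup\mathcal{V}$ with $\mathcal{U}=\binom{[k-1]}{r}$ and $\mathcal{V}$ consisting of sets containing $k$, computes its shadow via the same recurrence the algorithm uses, and asserts by induction that no other family of $x$ $r$-sets can do better; the justification of that last minimality claim (``any other collection must contain at least $k$ elements, so\dots'') is really only a sketch of the hard direction of Kruskal--Katona. You instead quote Kruskal--Katona as a black box, reducing the lemma to the purely mechanical claim that $K(r,x)$ equals the cascade bound $\binom{a_r}{r-1}+\cdots+\binom{a_s}{s-1}$, which you verify via the loop invariant $\binom{k-1}{r}\le x<\binom{k}{r}$ and induction on $r$. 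What each buys: the paper's version is self-contained in spirit and makes the extremal family explicit, but leaves the lower bound for arbitrary families essentially unproved; yours is cleaner and fully rigorous modulo a standard cited theorem, and your invariant analysis (in particular the check that $x'=x-\binom{a_r}{r}<\binom{a_r}{r-1}$ forces strictly decreasing cascade indices) is exactly the bookkeeping the paper glosses over. One small point to tidy: your induction should explicitly cover the terminal case $K(r-1,0)=0$ (set by the algorithm's $x=0$ branch), which corresponds to the empty cascade when the remainder vanishes, rather than resting the base case only on row $r=0$.
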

\begin{proof}
Assume first that $r = 1$, or $r - 1 = 0$. Since $K(0,x) = 0$ for any $x>0$, Line 16 of the algorithm gives
\begin{align*}
K(r,x) & = K(0,x-k-1) + {{k-1} \choose 0} \\
\Rightarrow K(r,x) & = 1.
\end{align*}
This is correct since the empty set is comparable to any number of singleton sets, and there is only one empty set.

Now assume that $r > 1$. Again, Line 16 of the algorithm gives the recursion step: $K(r,x) = K(r-1,x-{{k-1} \choose {r}}) + {{k-1} \choose {r-1}}$. We consider two cases, one where the value of $k$ was updated in Line 11 and one where it was not.

\underline{Case 1}: If $k$ was updated in Line 10, then $x$ is exactly equal to ${k \choose r}$ before $k$ was incremented by 1. This means, at Line 16, $x$ is equal to $\comb{k-1}{r}$, and hence $K(r-1,x-{{k-1}\choose r}) = K(r-1,0) = 0$. Now if we have the exact collection $\scr{U} = \comb{\{1,2,\ldots,k-1\}}{r}$, then the number of $(r-1)$-sets contained in our collection of $x$ $r$-sets must be equal to $\comb{k-1}{r-1}$, counting all $(r-1)$-subsets of $[k-1] = \{1,2,\ldots,k-1\}$.

Since any other collection of $x$ $r$-sets must contain at least $k$ elements, then the number we obtained when we considered $\scr{U}=\comb{\{1,2,\ldots,k-1\}}{r}$ must have been the lower bound for any such collection. Thus, $K(r,x) =  K(r-1,x-{{k-1} \choose {r}}) + {{k-1} \choose {r-1}}$ must be the lower bound for the number of $(r-1)$-sets contained in $x$ $r$-sets.

\underline{Case 2}: If $k$ was not updated in Line 10, then ${{k-1} \choose r} < x < {k \choose r}$. Consider the family of sets $\scr{A} = \scr{U} \cup \scr{V}$ where $\scr{U}$ is the set of all $r$-subsets of $[k-1]$, and $\scr{V}$ contains $x - \comb{k-1}{r}$ other sets, all of which contain the element $k$.

The number of $(r-1)$-sets contained in $\scr{U}$ is equal to $\comb{k-1}{r-1}$, that is, all possibilities of taking $r-1$ elements from $[k-1]$.

As for $\scr{V}$, since $\scr{U}$ already contains all $(r-1)$-subsets of $[k-1]$, we will only count the number of $(r-1)$-sets comparable to $\scr{V}$ which contain the element $k$. By removing $k$ from all of the sets in $\scr{V}$, we can see that this number is bounded below by the number of $(r-2)$-sets that must be contained in any collection of $x - \comb{k-1}{r}$ $(r-1)$-sets, or by induction, the value $K(r-1,\comb{k-1}{r})$.

Hence, the number of $(r-1)$-sets that are necessarily contained in any collection of $x$ $r$-sets must be bounded below by $K(r,x) = K(r-1,\comb{k-1}{r}) + \comb{k-1}{r-1}$, completing the proof.
\end{proof}

\begin{mytheorem}
The list $P(n)$ in Algorithm 5 contains all profiles of monotone Boolean functions on $n$ variables.
\end{mytheorem}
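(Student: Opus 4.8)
\emph{The plan.} The plan is to translate the statement into the language of antichains and argue by induction on the size of the largest minimal term. As noted in the Motivation section, profiles of $n$-variable MBFs correspond exactly to antichains in $2^{[n]}$: a vector $(a_1,\dots,a_n)$ is a profile precisely when there is an antichain containing $a_i$ sets of size $i$ for each $i$. So it suffices to show that for every antichain $\scr{A}$, its profile appears in $P(n)$. I would induct on $m$, the largest index with $a_m > 0$ (equivalently, the size of the largest set in $\scr{A}$). For the base case $m \le 1$ the profile is $(j,0,\dots,0)$ with $0 \le j \le n$, and all of these are placed in $P(n)$ by the initialization that sets the first column to $[0\ 1\ \cdots\ n]^{T}$.

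For the inductive step, fix a realizable profile $(a_1,\dots,a_m,0,\dots,0)$ with $m \ge 2$ and $a_m > 0$, realized by an antichain $\scr{A}$ with set families $\scr{F}_i$ (so $|\scr{F}_i| = a_i$). The central step is a \emph{reduction lemma}: the profile $(a_1,\dots,a_{m-2}, a_{m-1} + K(m,a_m), 0, \dots, 0)$ is again realizable. To prove it, I would delete the top family $\scr{F}_m$ and replace it by $K(m,a_m)$ sets chosen from its lower shadow $\partial \scr{F}_m$, the collection of $(m-1)$-sets contained in some member of $\scr{F}_m$. Two observations make this legitimate. First, because $\scr{A}$ is an antichain, no $(m-1)$-subset of a set in $\scr{F}_m$ can contain a chosen set of size $\le m-2$, nor can such a subset already lie in $\scr{F}_{m-1}$ (in either case it would sit below a member of $\scr{F}_m$); hence every set of $\partial \scr{F}_m$ is incomparable with $\scr{F}_1 \cup \cdots \cup \scr{F}_{m-2}$ and disjoint from $\scr{F}_{m-1}$. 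Second, by Lemma~\ref{lemma:algK} we have $|\partial \scr{F}_m| \ge K(m,a_m)$, so I may select exactly $K(m,a_m)$ of these sets. Adjoining them to $\scr{F}_{m-1}$ and discarding $\scr{F}_m$ yields an antichain with the claimed reduced profile.

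I would then close the induction by matching this reduction to the algorithm. Since $a_m>0$ forces $K(m,a_m)>0$, the $(m-1)$-st entry $a_{m-1}+K(m,a_m)$ of the reduced profile is positive, so its largest index is exactly $m-1$; by the induction hypothesis it lies in $P(n)$. Moreover the profiles with largest index exactly $m-1$ are precisely those appended during iteration $r=m-1$ of the outer loop, which is exactly the block stored in \txt{recent} at iteration $r = m$. Because its $(m-1)$-st entry is at least $K(m,a_m)$, the reduced profile qualifies as a row of \txt{candidates} when the inner loop reaches $x = a_m \le \binom{n}{m}$; the algorithm subtracts $K(m,a_m)$ from column $m-1$ and adds $a_m$ to column $m$, reproducing $(a_1,\dots,a_m,0,\dots,0)$ and appending it to $P(n)$.

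\emph{Main obstacle.} The hard part will be the reduction lemma — specifically, verifying that swapping the $m$-sets for shadow $(m-1)$-sets always preserves the antichain property, which is exactly where Lemma~\ref{lemma:algK} and the antichain hypothesis on $\scr{A}$ do the real work. A secondary technical point, needed to connect the combinatorics to the code, is the bookkeeping invariant that \txt{recent} at iteration $r$ holds exactly the profiles whose largest nonzero entry is in position $r-1$; this follows by tracking which rows each iteration appends, but it must be stated carefully for the matching argument to go through.
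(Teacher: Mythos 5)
Your proof is correct and follows essentially the same route as the paper's: induction on the position of the rightmost nonzero entry of the profile, with the base case supplied by the initialization of $P(n)$ and the inductive step matched, via Lemma~\ref{lemma:algK}, to the substitution performed in the loop at Line 28. Your explicit reduction lemma --- deleting $\scr{F}_m$ and adjoining $K(m,a_m)$ sets of its lower shadow to obtain an antichain realizing the reduced profile --- is a step the paper's proof leaves implicit, and spelling it out (together with the \txt{recent} bookkeeping you flag) is exactly what is needed to make the completeness claim fully rigorous.
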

\begin{proof}
We perform induction on the rightmost nonzero entry in a profile vector.

When $P(n)$ is initialized, the empty profile and all profiles with a single nonzero entry in the first position are included. This is just the collection of all MBFs with only 1-sets as minimal terms, and there are $n$ such profiles as there are $n$ such sets in $2^{[n]}$.

Now assume that $P(n)$ contains all profiles with the rightmost nonzero entry in the $r$-th position.

First of all, note that when $x=0$, the conditional in Line 18 fails, and so $C(r,0) = d$, which was most recently updated to the value of $s$, the running total of all profiles so far. This means that $C(r,0) - C(r-1,0)$ is the number of new profiles added when iterating in the $(r-1)$-st row. From Line 26, we see that \txt{recent} contains exactly the profiles with rightmost nonzero entry in the $(r-1)$-st position.

The loop starting at Line 28 looks at $K(r,x)$, which by Lemma~\ref{lemma:algK} tells how many $(r-1)$-sets are equivalent to $x$ $r$-sets. Then all the profiles which have $(r-1)$-st entry \emph{at least} $K(r,x)$ will be taken -- this is \txt{candidates}. The next few lines do a substitution, replacing this number of $(r-1)$-sets by $x$ in the $r$-th entry. This new set of profile vectors is then appended into the existing list, and values are updated.

Finally we prove the fact that the variable $s$ keeps track of how many profiles have been generated already. Since $s$ is at each iteration incremented by $d$, which is in turn $C(r,x)$, we just have to prove that $C(r,x)$ encodes the number of profiles such that the rightmost nonzero entry is an $x$ in the $r$th position. But this is apparent from the loop starting at Line 18, since we are forcing $j$ to be larger than $K(r,x)$, so that from the previous row, we are only looking at profiles where the $(r-1)$-st entry is at least $K(r,x)$. This allows us to make the substitution we describe above in the loop starting at Line 28.
\end{proof}

\end{section}

\begin{section}{Conclusions and Discussion}

In this paper we propose a strategy for counting inequivalent monotone Boolean functions (MBFs), which is a challenging enumeration problem on a fundamental combinatorial object.  The strategy is to break the computation into smaller parts based on profiles of MBFs. We describe and implement a non-trivial algorithm to generate the profiles.  Using profiles, we are able to generate the full set of inequivalent $7$-variable MBFs in manageable pieces, which in particular allows us to find that the number of such functions is $R(7) = $ \textbf{490013148}.

At present it appears difficult to extend this technique to computing $R(8)$. This is because it requires generating, rather than merely counting, a nontrivial fraction of the profiles.

It is appealing to try to use $R(7)$ to compute $D(9)$, which is presently not known. Wiedemann in 1991 computed $D(8)$ using $D(6)$ and $R(6)$, by going through all pairs of functions in $D(6) \times R(6)$, and using a lookup function to calculate how many functions in $D(8)$ can be formed by fixing two ``middle functions''.   See~\cite{wiedemann} for details.

To apply this technique to the computation of $D(9)$, we would need to generate $D(7)$ from $R(7)$, store the number of functions in each equivalence class, and then calculate the number of MBFs each function contains as a preprocessing step. The difficulty lies in the sheer amount of computation needed, but as the strategy is simple to parallelize, there is some hope. To make the calculation more manageable we would like to understand the symmetries of monotone Boolean functions better. We might start by trying to count functions by their symmetry group, or by extending the sequence of inequivalent non-symmetric MBFs that we consider in Section~\ref{subsec:bounds}. 

\end{section}

\begin{section}{Acknowledgments}

This research was partially supported by an NSERC Discovery Grant. We would like to thank the SFU Math Department, the IRMACS Centre at SFU, and Westgrid and Compute Canada for the access to the computational resources needed to perform the calculations in this paper.

Also, we are grateful to Michael Monagan and Utz-Uwe Haus for discussion, in particular we thank MM for pointing out that we needed to use hash tables for the large lists.

\end{section}

\bibliographystyle{amsalpha}

\providecommand{\bysame}{\leavevmode\hbox to3em{\hrulefill}\thinspace}
\providecommand{\MR}{\relax\ifhmode\unskip\space\fi MR }
\providecommand{\MRhref}[2]{%
  \href{http://www.ams.org/mathscinet-getitem?mr=#1}{#2}
}
\providecommand{\href}[2]{#2}

\end{document}